\theoremstyle{plain}
\newtheorem{theorem}{Theorem}[section]
\newtheorem{corollary}[theorem]{Corollary}
\theoremstyle{definition}
\newtheorem{definition}{Definition}[section]
\newtheorem{example}{Example}[section]
\theoremstyle{remark}
\newtheorem*{remark}{Remark}
\numberwithin{equation}{section}
\begin{document}

\title[WEIGHTED AUTOMATA \& RECURRENCE EQUATIONS FOR REGULAR LANGUAGES]{WEIGHTED AUTOMATA AND RECURRENCE EQUATIONS FOR REGULAR LANGUAGES}
\author[E. Carta-Gerardino \& P. Babaali]{E. Carta-Gerardino$^1$ \\ ecarta-gerardino@york.cuny.edu \\ \\ P. Babaali$^2$ \\ pbabaali@york.cuny.edu \\ \\ $^{1,2}$Department of Mathematics and Computer Science \\ York College, City University of New York \\ 94-20 Guy R. Brewer Boulevard, Jamaica, New York 11451 \\ United States}

\begin{abstract}Let $\mathcal{P}(\Sigma^*)$ be the semiring of languages, and consider its subset $\mathcal{P}(\Sigma)$. In this paper we define the language recognized by a weighted automaton over $\mathcal{P}(\Sigma)$ and a one-letter alphabet. Similarly, we introduce the notion of language recognition by linear recurrence equations with coefficients in $\mathcal{P}(\Sigma)$. As we will see, these two definitions coincide. We prove that the languages recognized by linear recurrence equations with coefficients in $\mathcal{P}(\Sigma)$ are precisely the regular languages, thus providing an alternative way to present these languages. A remarkable consequence of this kind of recognition is that it induces a partition of the language into its cross-sections, where the $n$th cross-section contains all the words of length $n$ in the language. Finally, we show how to use linear recurrence equations to calculate the density function of a regular language, which assigns to every $n$ the number of words of length $n$ in the language. We also show how to count the number of successful paths of a weighted automaton. \\ \\ \textit{Keywords:} cross-section of a language, density of a language, language recognition, recurrence equations, semirings, weighted automata\end{abstract}


\maketitle

\section{Introduction}

Weighted automata are powerful finite-state machines in which every transition carries a weight from a semiring. These automata have been studied recently in a wide range of settings, from very applied fields, like natural language and speech-processing (see \cite{Per96,Moh00,Moh04}), to more theoretical ones, like logic (see \cite{Dro05}). In our current research, we are particularly interested in the applications of weighted automata to formal language theory.

A finite automaton (\cite{Eil74,Kho01}) can be regarded as a particular type of weighted automaton, by letting the weights come from the \textit{Boolean semiring} (i.e., the weights are either 0 or 1). Thus, the class of weighted automata contains the class of finite automata. Kleene's Theorem states that finite automata recognize the regular languages. Hence, it is no surprise that weighted automata can be used to recognize a class of languages that contains the class of the regular languages. In particular, it can be shown that weighted automata can be used to recognize context-free languages (see \cite{Cor99}).

In our work we are interested in weighted automata over a one-letter alphabet. We refer to these automata as counting automata, since they can be used as counting devices, with applications in combinatorics and enumeration (see \cite{Rut01,Rue02,Rut02}), among others. We start by recalling the definitions of a semiring and a formal power series (Section 2). These notions provide the setting we need to associate a linear recurrence equation to each state of a counting automaton. In fact, we will see that a counting automaton over a semiring $K$ generates a system of linear recurrence equations with coefficients in $K$ (Section 3).

Given our interest in the applications of weighted automata to formal language theory, we explore counting automata, and recurrence equations, over the \textit{semiring of languages}, $\mathcal{P}(\Sigma^*)$ (Section 4). Specifically, we consider its subset $\mathcal{P}(\Sigma)$. We define the language recognized by a counting automaton over $\mathcal{P}(\Sigma)$, and introduce the idea of language recognition by linear recurrence equations with coefficients in $\mathcal{P}(\Sigma)$. We will see that these two types of language recognition are equivalent. A consequence of recognizing a language this way is that we obtain a partition of the language into its \textit{cross-sections}, where the $n$th cross-section contains all the words of length $n$ in the language (see \cite{Ack07,Ack09}). It is important to notice that this is the case because the weights of the automata and the coefficients of the recurrence equations come from $\mathcal{P}(\Sigma)$. We will show that the languages recognized by counting automata over $\mathcal{P}(\Sigma)$, and by linear recurrence equations with coefficients in $\mathcal{P}(\Sigma)$, are closed under certain operations. We then prove that a language recognized by a system of linear recurrence equations with coefficients in $\mathcal{P}(\Sigma)$ is regular, and that every regular language is recognized by a system of linear recurrence equations with coefficients in $\mathcal{P}(\Sigma)$. This result provides a novel way to present this important class of languages.

We conclude this paper by showing how to use linear recurrence equations to count, for every $n$, the number of words of length $n$ in a regular language (Section 5). That is, we show how to calculate the \textit{density function} of the language (see \cite{Roz97}). We will see that the number of words of length $n$ in a language is closely related to the number of successful paths of length $n$ in the counting automaton recognizing the language. Thus, we start by counting the number of successful paths of any given length in an automaton. We do this by constructing an automaton that counts the number of successful paths of another automaton. We refer to this machine as a path-counting automaton, and we use it to construct the self-counting automaton, which we will define as a machine with the ability to count its own successful paths. Therefore, for every $n$, we have a way to (\textit{i}) generate all the words of length $n$ and to (\textit{ii}) count the number of words of length $n$ in a regular language.

\section{Preliminaries: Semirings and Formal Power Series}

A \textbf{monoid} is a nonempty set on which we define an associative operation, and in which there is an identity element. Using this, we can define a semiring.

\begin{definition}
A \textbf{semiring} $K = (K, +, \cdot, 0, 1)$ is a set $K$ satisfying
\begin{enumerate}
\item $(K, +, 0)$ is a commutative monoid with identity element 0
\item $(K, \cdot, 1)$ is a monoid with identity element 1
\item $\text{for all } a,b,c \text{ in } K$, $a \cdot (b+c) = a \cdot b + a \cdot c$
\item $\text{for all } a \text{ in } K$, $0 \cdot a = a \cdot 0 = 0$
\end{enumerate}
\end{definition}

From the definition we can see that every ring with unity is a semiring. (For example, the ring of the real numbers is an example of a semiring.) Some nontrivial examples of semirings are the \textbf{semiring of natural numbers} $\mathbb{N} = (\mathbb{N}, +, \cdot, 0, 1)$, the \textbf{Boolean semiring} $\mathbb{B} = (\{0,1\}, \vee, \wedge ,0 ,1)$, and the \textbf{semiring of languages} $\mathcal{P}(\Sigma^*) = (\mathcal{P}(\Sigma^*), \cup, \cdot, \emptyset, \{\varepsilon\})$, where $\Sigma$ is a finite \textit{alphabet}, $\Sigma^*$ is the set of all \textit{words} of finite length over $\Sigma$ ($\varepsilon$ denotes the empty word), and $\mathcal{P}(\Sigma^*)$ is the power set of $\Sigma^*$, known as the set of \textit{languages} over $\Sigma$. It is not difficult to see that if $K_1$ and $K_2$ are two semirings, then their direct product $K_1 \times K_2$ is also a semiring.

\begin{definition}
Let $A$ be a finite alphabet and $K$ a semiring. We can define a map $s: A^* \rightarrow K$, assigning to every word $w \in A^*$ an element $c \in K$. Such a map is known as a \textbf{formal power series}.
\end{definition}

We call $s(w) = c$ the \textit{coefficient} of $w$, or the \textit{weight} of $w$. Of course, these coefficients or weights have different interpretations, depending on the particular semiring $K$. The set of all formal power series $s: A^* \rightarrow K$ is usually denoted by $K\left\langle \left\langle A^* \right\rangle \right\rangle$.

For example, if $A$ is a finite alphabet and $K = \mathbb{B}$, then notice that for $w \in A^*$, $s(w)$ is either 0 or 1 (false or true, respectively). Hence, a formal power series $s \in \mathbb{B}\left\langle \left\langle A^* \right\rangle \right\rangle$ rejects or accepts a word $w \in A^*$.


We have seen that, given a semiring $K$ (and a finite alphabet $A$), we can define the set of formal power series $K\left\langle \left\langle A^* \right\rangle \right\rangle$. In turn, the set of formal power series can be made into a semiring in the following way (\cite{Kui86}). Addition of two series $s_1, s_2 \in K\left\langle \left\langle A^* \right\rangle \right\rangle$ is defined by $(s_1 + s_2)(w) = s_1(w) + s_2(w)$, for all $w \in A^*$. The series defined by $\mathbf{0}(w) = 0$ is the identity for the addition. Multiplication of two series $s_1, s_2$ is defined by $(s_1 \cdot s_2)(w) = \displaystyle \sum_{w_1w_2=w} (s_1(w_1))\cdot(s_2(w_2))$, for all $w \in A^*$. (This operation is known as the Cauchy product of two formal power series.) The identity for the product is the series $\pmb{\varepsilon}$ defined by $\pmb{\varepsilon}(\varepsilon) = 1$, while $\pmb{\varepsilon}(w) = 0$ for any other word $w \neq \varepsilon$. Hence, $(K\left\langle \left\langle A^* \right\rangle \right\rangle, +, \cdot, \mathbf{0}, \pmb{\varepsilon})$ is a semiring, the \textbf{semiring of formal power series}.

\section{Weighted Automata and Recurrence Equations}

A convenient way to represent some formal power series is by means of weighted automata (\cite{Dro05}).

\begin{definition}
Let $K = (K, +, \cdot, 0, 1)$ be a semiring and $A$ a finite alphabet. A \textbf{weighted automaton} $\mathcal{A}$ over $K$ and $A$ is a quadruple $\mathcal{A} = (Q_{\mathcal{A}}, \iota, \tau, \varphi)$, where $Q_{\mathcal{A}}$ is a finite set of states, $\iota, \varphi: Q_{\mathcal{A}} \rightarrow K$ are functions defining the initial weight and the final weight of a state, respectively, and if $n$ is the number of states, $\tau: A \rightarrow K^{n \times n}$ is the transition weight function. We let $\tau(x)$ be an $(n \times n)$-matrix whose $(i,j)$-entry $\tau(x)_{i,j} \in K$ gives the weight of the transition $q_i \stackrel{x}{\longrightarrow} q_j$. If $\tau(x)_{i,j} = a$, we denote this by $q_i \stackrel{x|a}{\longrightarrow} q_j$.
\end{definition}

Notice that the definition of a weighted automaton does not include the notions of initial or final states. However, by appropriately defining $\iota$ and $\varphi$, it is possible to equip a weighted automaton with initial and final states, as we will see later on.

Consider now the path $P: q_0 \stackrel{x_1|a_1}{\longrightarrow} q_1 \stackrel{x_2|a_2}{\longrightarrow} q_2 \longrightarrow \ldots \longrightarrow q_{n-1} \stackrel{x_n|a_n}{\longrightarrow} q_n$ in $\mathcal{A}$. Denote the \textbf{length} of the path by $|P|$. Now define the \textbf{weight} of the path by
\begin{displaymath} \left\| P \right\| = \iota(q_0) \cdot a_1 \cdot  a_2 \cdots a_n \cdot \varphi(q_n). \end{displaymath}
Notice that this path has as label the word $w = x_1 x_2 \ldots x_n \in A^*$. There might be, of course, other paths with label $w = x_1 x_2 \ldots x_n$. We will define the weight of the word $w$ in $\mathcal{A}$ to be the sum of the weights $\left\| P \right\|$ over all paths $P$ with label $w$. Denote this by $\left\| \mathcal{A} \right\| (w)$, and notice that the weight of a word $w$ in $\mathcal{A}$ is a function from $A^*$ to $K$. That is, $\left\| \mathcal{A} \right\|$ is a formal power series, so $\left\| \mathcal{A} \right\| \in K\left\langle \left\langle A^* \right\rangle \right\rangle$.

A formal power series $s \in K\left\langle \left\langle A^* \right\rangle \right\rangle$ is said to be \textit{automata recognizable} if there is a weighted automaton $\mathcal{A}$ such that $s = \left\| \mathcal{A} \right\|$. In this case we say that $\mathcal{A}$ is an \textit{automata representation} for $s \in K\left\langle \left\langle A^* \right\rangle \right\rangle$.

In our research we are interested in automata over a one-letter alphabet $A = \{x\}$. Hence, a typical path in such an automaton is $P: q_0 \stackrel{x|a_1}{\longrightarrow} q_1 \stackrel{x|a_2}{\longrightarrow} q_2 \longrightarrow \ldots \longrightarrow q_{n-1} \stackrel{x|a_n}{\longrightarrow} q_n$. Given that every transition reads the letter $x$, we eliminate it from the diagram for simplicity, thus making a typical path look like $P: q_0 \stackrel{a_1}{\longrightarrow} q_1 \stackrel{a_2}{\longrightarrow} q_2 \longrightarrow \ldots \longrightarrow q_{n-1} \stackrel{a_n}{\longrightarrow} q_n$. Since $A = \{x\}$, an arbitrary word $w \in A^*$ has the form $w = x^n$ for some $n \in \mathbb{N}$. By definition, $\left\| \mathcal{A} \right\|(w) = \left\| \mathcal{A} \right\|(x^n)$ equals the sum of $\left\| P \right\|$ over all paths $P$ with label $w = x^n$. But since every transition reads the letter $x$, $\left\| \mathcal{A} \right\|(x^n)$ equals the sum of $\left\| P \right\|$ over all paths $P$ of length $n$. We call $\left\| \mathcal{A} \right\|$ the \textbf{behavior of the automaton} $\mathcal{A}$, and define it as
\begin{displaymath} \left\| \mathcal{A} \right\|(x^n) = \sum_P \{ \left\| P \right\| : |P|=n \}. \end{displaymath}
Note that in this kind of automaton we are not directly accepting/rejecting words over some alphabet, but rather counting paths of length $n$, and keeping track of the weights of such paths. The idea of using automata as counting devices has been used recently with applications in combinatorics and difference equations \cite{Rut01,Rue02,Rut02}. Thus, we refer to weighted automata over a one-letter alphabet as \textbf{counting automata}. In our work, we further explore some of the properties of counting automata.

Suppose we are interested in computing the weight of all paths of length $n$ (equivalently, all paths with label $x^n$) in an automaton $\mathcal{A}$, starting at a specific state $q_0$. Then we would look at all paths of length $n$ starting at $q_0$, compute the weight of each of these paths, and add up these weights. We call this the \textbf{behavior of the state} $q_0$ and denote it by $\left\| \mathcal{A} \right\|_{q_0}$. Then
\begin{displaymath} \left\| \mathcal{A} \right\|_{q_0}(x^n) = \sum_P \{ \left\| P \right\| : |P|=n \text{ and } P \text{ starts at } q_0 \}. \end{displaymath}
Notice that, given any state $q_0$, $\left\| \mathcal{A} \right\|_{q_0} \in K\left\langle \left\langle \{x\}^* \right\rangle \right\rangle \cong K^{\mathbb{N}}$. Since $\left\| \mathcal{A} \right\|_{q_0}$ assigns to every word $x^n$ an element $c_n \in K$, we can identify $\left\| \mathcal{A} \right\|_{q_0}: x^n \mapsto c_n$ with a function $f_0: n \mapsto c_n$. Hence, in a counting automaton, every state $q_0$ generates a function $f_0 \in K^{\mathbb{N}}$, and thus we can identify each state with the function it generates. The idea of associating a function to each state of an automaton goes back to classical automata theory (see \cite{Brz64,Eil74}).

In what follows, we will assume that the initial weights of the states of an automaton $\mathcal{A}$ are either 0 or 1. Those with a weight of 1 will be the initial states, and those with a weight of 0 will be non-initial. Denote the set of initial states by $I_{\mathcal{A}}$. For the moment, suppose that $I_{\mathcal{A}} = Q_{\mathcal{A}}$, so that every state is allowed to be an initial state. In the next section we will assume that $I_{\mathcal{A}} \subsetneq Q_{\mathcal{A}}$. We will allow more freedom to the way we define the final weights of the states of an automaton. Those states with a non-zero weight will be the final states, and those with a weight of 0 will be non-final. We will denote the set of final states by $F_{\mathcal{A}}$.

Now consider an arbitrary state $f_0 \in Q_{\mathcal{A}}$. Let $\{f_1,\ldots,f_k\}$ be the set of states that can be reached from $f_0$ through paths of length 1, with transition weights $a_1,\dots,a_k$, respectively. Suppose that the final weight of state $f_0$ is $c_0$, and that the final weights of states $f_1,\ldots,f_k$ are $c_1,\ldots,c_k$, respectively. A graphical representation of this is

\begin{figure}[H]
\centering
\includegraphics[scale=1]{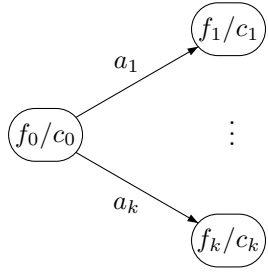}
\caption{Paths of length 1 starting at state $f_0$}
\label{fig:AUTlengthoneCON.eps}
\end{figure}

Let $f_0 \in K^{\mathbb{N}}$ be the function generated by state $f_0$, and let $f_1,\ldots,f_k \in K^{\mathbb{N}}$ be the functions generated by states $f_1,\ldots,f_k$, respectively. It can be shown (see \cite{Rut01}), that
\begin{equation} \label{eq:EqnsRec}
\begin{aligned}
&f_0(0)= c_0, \\
&f_0(n+1)= a_1 f_1(n) + a_2 f_2(n) + \ldots + a_k f_k(n), \text{ for } n \geq 0.
\end{aligned}
\end{equation}

Notice that Eqs. \ref{eq:EqnsRec} provide a recursive definition of the function generated by each state of a counting automaton. Using this, it can be shown that a counting automaton $\mathcal{A}$ over $K$ generates a system of linear recurrence equations. And by definition, these are the only equations recognized by $\mathcal{A}$. 


\begin{theorem} \label{TheoremAutRec} (\cite{Rut01}) Suppose that $\mathcal{A}$ is a counting automaton over a semiring $K$.
\begin{figure}[H]
\centering
\includegraphics[scale=1]{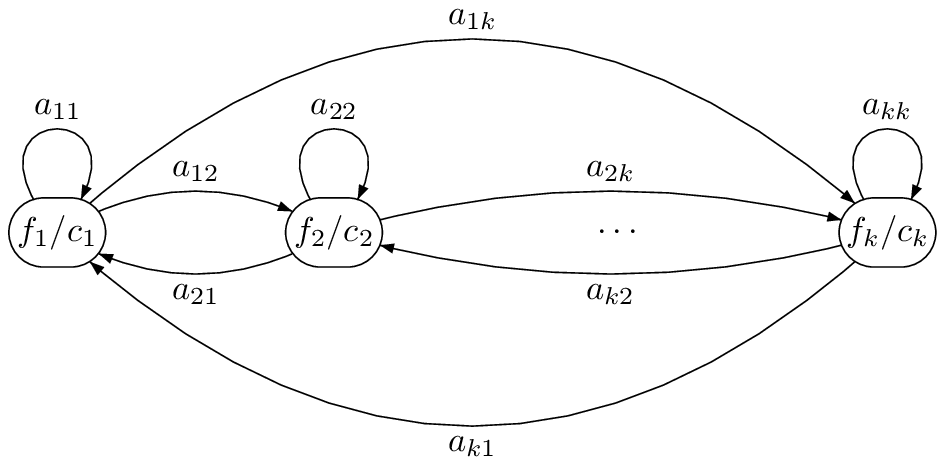}
\label{fig:Weighted1.eps}
\end{figure}
\noindent The functions $f_1,f_2,\ldots,f_k \in K^{\mathbb{N}}$ generated by $\mathcal{A}$ satisfy the following system of linear recurrence equations.
\begin{equation} \label{eq:SysAutRec} f_i(n+1) = \displaystyle \sum_{j=1}^k a_{ij} f_j(n), \quad f_i(0) = c_i, \quad 1 \leq i \leq k \end{equation}
\noindent Conversely, given this system of linear recurrence equations, the counting automaton recognizing it is precisely $\mathcal{A}$.
\end{theorem}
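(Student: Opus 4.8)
The plan is to prove both directions by unwinding the definition of the behavior $\|\mathcal{A}\|_{q_i}$ of a state and exploiting the recursive structure of paths. I identify each state $q_i$ with the function $f_i \in K^{\mathbb{N}}$ it generates, where $f_i(n) = \|\mathcal{A}\|_{q_i}(x^n)$ is the sum of $\|P\|$ over all paths $P$ of length $n$ starting at $q_i$, and I write $a_{ij} = \tau(x)_{i,j}$ for the transition weights and $c_i = \varphi(q_i)$ for the final weights (recall that under the standing assumption $I_{\mathcal{A}} = Q_{\mathcal{A}}$ every initial weight equals $1$).

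For the forward direction I would first verify the initial conditions. The only path of length $0$ starting at $q_i$ is the trivial path at $q_i$, whose weight is $\iota(q_i)\cdot\varphi(q_i) = 1\cdot c_i = c_i$; hence $f_i(0) = c_i$. For the recurrence, the key observation is that every path $P$ of length $n+1$ starting at $q_i$ factors uniquely as a single transition $q_i \stackrel{a_{ij}}{\longrightarrow} q_j$ followed by a path $P'$ of length $n$ starting at $q_j$. Since $\iota(q_i) = \iota(q_j) = 1$, the weight splits multiplicatively as $\|P\| = a_{ij}\cdot\|P'\|$. Summing over all such paths, grouping them according to the intermediate state $q_j$, and pulling the common factor $a_{ij}$ out of each inner sum by distributivity yields
\begin{displaymath}
f_i(n+1) = \sum_{j=1}^k a_{ij}\sum_{|P'|=n,\, P'\text{ starts at } q_j} \|P'\| = \sum_{j=1}^k a_{ij}\,f_j(n),
\end{displaymath}
which is precisely the claimed system. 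This is really a simultaneous, matrix-form restatement of Eqs.~\ref{eq:EqnsRec} applied to every state at once.

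For the converse I would argue that the passage from the automaton data to the system of equations is invertible. A system of the form \ref{eq:SysAutRec} prescribes exactly one coefficient $a_{ij}$ for each ordered pair $(i,j)$ and one initial value $c_i$ for each $i$; reading these off as the entries of the transition matrix $\tau(x)$ and as the final weights $\varphi(q_i)$, with all initial weights set to $1$, reconstructs an automaton on $k$ states, and by the forward direction this automaton generates exactly the given system. Since the correspondence $(a_{ij}, c_i) \leftrightarrow \mathcal{A}$ is a bijection, the automaton recognizing the system is uniquely determined, namely $\mathcal{A}$ itself.

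The genuinely delicate point is not the algebra but the bookkeeping of weights: I must ensure that the initial and final weight factors are each accounted for exactly once. In particular, the final weight of the terminal state of $P$ is carried entirely inside $\|P'\|$, so it does not reappear when $a_{ij}$ is factored out; and the base case is where the final weight $c_i$ enters, reflecting the fact that $f_i(0)$ records termination at $q_i$ rather than any transition. Once this accounting is pinned down, both directions follow from associativity and distributivity in $K$.
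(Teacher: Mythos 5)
Your proof is correct and takes essentially the same route as the paper: the paper obtains the system by applying the single-state recursion (Eqs.~\ref{eq:EqnsRec}, which it cites from \cite{Rut01}) to every state simultaneously, and disposes of the converse by the definitional correspondence between the automaton's data $(a_{ij}, c_i)$ and the system's data. Your path-decomposition argument (unique factorization of a length-$(n+1)$ path into its first transition and a length-$n$ tail, using $\iota \equiv 1$ and left distributivity) simply supplies the details the paper delegates to the citation, and your weight bookkeeping is accurate.
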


\begin{example} \label{ExHigherDegree} Higher-Degree Systems

Consider the following system of linear recurrence equations over an arbitrary semiring $K$.
\begin{equation*} \label{eq:SysEx1}
\begin{tabular}{l r l}
$f_1(n+4) =$ & $a_{12}f_2(n),$ & $f_1(0) = c_1$ \\
$f_2(n+1) =$ & \hspace{-0.35cm} $a_{21}f_1(n) + a_{22}f_2(n),$ & $f_2(0) = c_2$
\end{tabular}
\end{equation*}

Note that the degree of $f_1$ is 4. Theorem \ref{TheoremAutRec} guarantees that we can build an automaton recognizing equations of degree 1. In order to use this result, we need to introduce additional functions that act as intermediate states. The functions we need can be defined as follows.
\begin{equation*} \label{eq:SysEx1a}
\begin{tabular}{l l}
$g_1(n+1) = f_1(n+2),$ & $g_1(0) = f_1(1) = d_1$ \\
$g_2(n+1) = f_1(n+3),$ & $g_2(0) = f_1(2) = d_2$ \\
$g_3(n+1) = f_1(n+4),$ & $g_3(0) = f_1(3) = d_3$
\end{tabular}
\end{equation*}

Using these auxiliary functions, we can rewrite the original system of equations as a system of equations of degree 1.
\begin{equation*} \label{eq:SysEx1b}
\begin{tabular}{l l l l l l}
$f_1(n+1) =$ & & $g_1(n)$ \\
$g_1(n+1) =$ & & & $g_2(n)$ \\
$g_2(n+1) =$ & & & & $g_3(n)$ \\
$g_3(n+1) =$ & & & & & $a_{12}f_2(n)$ \\
$f_2(n+1) =$ & \hspace{-0.35cm} $a_{21}f_1(n) +$ & & & & $a_{22}f_2(n)$
\end{tabular}
\end{equation*}

Now we can use Theorem \ref{TheoremAutRec} to build the automaton that recognizes the given system of linear recurrence equations.

\begin{figure}[H]
\centering
\includegraphics{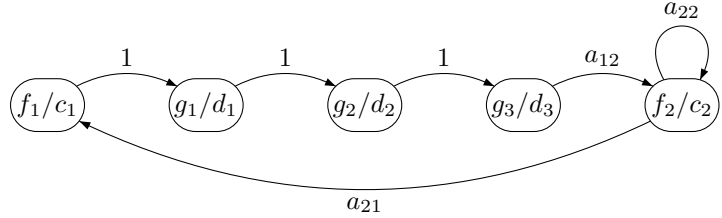}
\caption{Automaton recognizing the system of recurrence equations in Example \ref{ExHigherDegree}}
\label{fig:EXdeg4.eps}
\end{figure}

\end{example}

Theorem \ref{TheoremAutRec} above shows that a counting automaton over a semiring $K$ generates a system of linear recurrence equations with coefficients in $K$. In the next section we will restrict our attention to the case where $K = \mathcal{P}(\Sigma^*)$. Specifically, we will consider its subset $\mathcal{P}(\Sigma)$. Our goal is to define the language recognized by a counting automaton over $\mathcal{P}(\Sigma)$, and to define what it means for a language to be recognized by a system of linear recurrence equations with coefficients in $\mathcal{P}(\Sigma)$. One of the implications of defining languages this way is that we obtain an immediate partition of the language into its cross-sections. We will see that it is also possible to define these languages through formal grammars, and we will show that these languages are closed under union, concatenation, and the Kleene star. Using this, we will prove that the languages recognized by linear recurrence equations with coefficients in $\mathcal{P}(\Sigma)$ are precisely the regular languages.

\section{Language Recognition, Language Partition, and the Cross-Sections of a Regular Language}

In this section, the semiring we use for the weights of the automata and for the coefficients of the recurrence equations is $\mathcal{P}(\Sigma^*)$. In particular, we will only consider weights and coefficients in $\mathcal{P}(\Sigma) \subset \mathcal{P}(\Sigma^*)$.

Suppose that $\mathcal{A}$ is a counting automaton with weights in $\mathcal{P}(\Sigma)$, and assume that the set of states of $\mathcal{A}$ is $\{f_1, f_2, \ldots, f_k \}$. As it is customary when using automata for language recognition, we will assume that there is only one initial state. Without loss of generality, we will let the first state be the initial state. Therefore, $I_{\mathcal{A}} = \{ f_1 \}$. We now specify the final weights of the states in $\mathcal{A}$. Non-final states were defined as states that have a final weight of 0; in the semiring of languages, $\emptyset$. Final states were defined as states with a non-zero final weight. Specifically, we will assume that the final states have a final weight of 1; in the semiring of languages, $\{ \varepsilon \}$. Therefore, $f_i(0) = \emptyset$ if $f_i \notin F_{\mathcal{A}}$, and $f_i(0) = \{\varepsilon\}$ if $f_i \in F_{\mathcal{A}}$.


Let $\mathcal{A}$ be a counting automaton with $k$ states
\begin{figure}[H]
\centering
\includegraphics[scale=1]{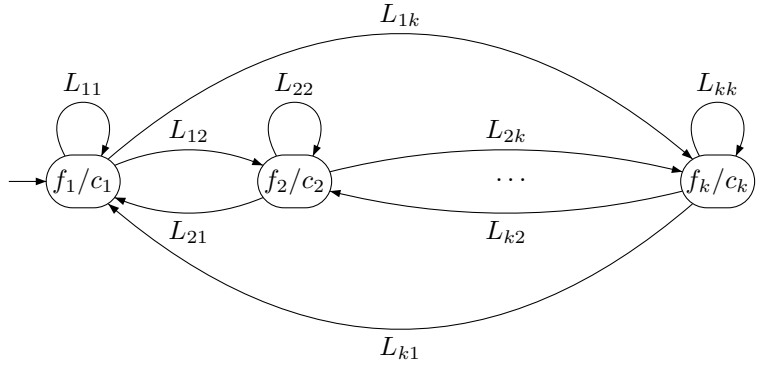}
\caption{Counting automaton $\mathcal{A}$ over $\mathcal{P}(\Sigma) \subset \mathcal{P}(\Sigma^*)$}
\label{fig:WeightedLanguages1.eps}
\end{figure}
\noindent where $I_{\mathcal{A}} = \{ f_1 \}$, $c_i = \{\varepsilon\}$ if $f_i \in F_{\mathcal{A}}$, $c_i = \emptyset$ if $f_i \notin F_{\mathcal{A}}$, and for every $i$ and every $j$, $L_{ij} \in \mathcal{P}(\Sigma)$. (If $L_{ij} = \emptyset$, we can eliminate this transition from the diagram.) We know that $f_1$, the initial state, generates a function $f_1 : \mathbb{N} \to \mathcal{P}(\Sigma^*)$ defined by the system
\begin{equation} \label{eq:SysLanguages} f_i(n+1) = \displaystyle \bigcup_{j=1}^k L_{ij} \cdot f_j(n), \quad f_i(0) = c_i, \quad 1 \leq i \leq k. \end{equation}
\noindent Since $L_{ij} \in \mathcal{P}(\Sigma)$, we have that for every $n \in \mathbb{N}$, $f_1(n)$ is a language containing words of length $n$. Denote $f_1(n)$ by $\mathcal{L}_n$.

\begin{definition}
The \textbf{language recognized by a counting automaton $\mathcal{A}$ over $\mathcal{P}(\Sigma)$} is denoted by $\mathcal{L}_{\mathcal{A}}$ and is defined by $\mathcal{L}_{\mathcal{A}} = \displaystyle \bigcup_n \mathcal{L}_n$.
\end{definition}

Thus, a word $w$ of length $n$ belongs to $\mathcal{L}_{\mathcal{A}}$ if $w$ belongs to $\mathcal{L}_n$. We will say that a word $w$ of length $n$ is recognized by the counting automaton $\mathcal{A}$ if there is a path of length $n$ starting at $f_1$ and ending at a final state with weight $\{w\}$.

Given that the languages $\mathcal{L}_n$ are defined recursively, we can also define the language $\displaystyle \bigcup_n \mathcal{L}_n$ in the following way.

\begin{definition}
$\mathcal{L} = \displaystyle \bigcup_n \mathcal{L}_n$ is known as the \textbf{language recognized by linear recurrence equations}, since the languages $\mathcal{L}_n$ are defined via linear recurrence equations.
\end{definition}

\begin{remark}
A consequence of recognizing a language $\mathcal{L}$ via linear recurrence equations with coefficients in $\mathcal{P}(\Sigma)$ is that the language is automatically partitioned into sets $\mathcal{L}_n$, where $\mathcal{L}_n$ contains all the words of length $n$ in $\mathcal{L}$, i.e., $\mathcal{L}_n$ is the $n$th \textbf{cross-section} of the language.
\end{remark}

Since the operations of the semiring of languages are union and concatenation, it is not difficult to see that we can also define $\mathcal{L}_{\mathcal{A}}$ by using a grammar. We associate a nonterminal symbol $A_i$ to every state $f_i$, except that we denote $A_1$ by $S$, the start symbol, since $f_1$ is the initial state. The set of terminal symbols is $\Sigma$. Consider an arbitrary transition $\displaystyle f_i \stackrel{L_{ij}}{\longrightarrow} f_j$ and suppose that $L_{ij} = \{ a_{ij,1}, a_{ij,2}, \ldots, a_{ij,m} \}$ is non-empty. Then to this transition we associate the productions $A_i \rightarrow a_{ij,1} A_j, \ A_i \rightarrow a_{ij,2} A_j, \ldots , \ A_i \rightarrow a_{ij,m} A_j$. Finally, for every state $f_i \in F_{\mathcal{A}}$ (so $f_i(0) = \{ \varepsilon \}$) we include a production $A_i \rightarrow \varepsilon$. Denote this grammar by $G_{\mathcal{A}}$. Then we will also define $\mathcal{L}_{\mathcal{A}}$ by $\mathcal{L}_{\mathcal{A}} = L(G_{\mathcal{A}})$.

We now present the closure properties of these languages.

\begin{theorem} \label{TheoremClosure}
The languages recognized by counting automata over $\mathcal{P}(\Sigma)$ are closed under union, concatenation, and the Kleene star.
\end{theorem}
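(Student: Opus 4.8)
The plan is to adapt the classical nondeterministic constructions for union, concatenation, and star to the present formalism, in which a counting automaton over $\mathcal{P}(\Sigma)$ has a single initial state, final states carrying weight $\{\varepsilon\}$, and transitions labeled by sets of single letters $L_{ij}\in\mathcal{P}(\Sigma)$. The crucial restriction is that there are no $\varepsilon$-transitions: every step of a path reads one meta-letter $x$ and contributes exactly one letter of $\Sigma$. Consequently I cannot invoke the usual $\varepsilon$-glueing directly; instead I will simulate an $\varepsilon$-move from a state $p$ into the initial state of another machine by copying onto $p$ all the outgoing transitions of that initial state, and I will encode ``$\varepsilon$ belongs to the language'' solely through final-state status. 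Throughout, let $\mathcal{A}$ have states $f_1,\dots,f_k$ (initial $f_1$, final set $F_{\mathcal{A}}$, labels $L^{\mathcal{A}}_{ij}$) recognizing $\mathcal{L}_{\mathcal{A}}$, and let $\mathcal{B}$ have states $g_1,\dots,g_m$ (initial $g_1$, final set $F_{\mathcal{B}}$, labels $L^{\mathcal{B}}_{ij}$) recognizing $\mathcal{L}_{\mathcal{B}}$, with disjoint state sets. In each case I will exhibit the new automaton and verify, cross-section by cross-section via the recursion of Eq. \ref{eq:SysLanguages}, that its behavior is the desired language.

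For union I would introduce a fresh initial state $s$ carrying the outgoing transitions of both old initial states: for every $j$ a transition $s \stackrel{L^{\mathcal{A}}_{1j}}{\longrightarrow} f_j$ and a transition $s \stackrel{L^{\mathcal{B}}_{1j}}{\longrightarrow} g_j$; all transitions of $\mathcal{A}$ and $\mathcal{B}$ are retained, and the final set is $F_{\mathcal{A}}\cup F_{\mathcal{B}}$ together with $s$ precisely when $f_1\in F_{\mathcal{A}}$ or $g_1\in F_{\mathcal{B}}$. The recurrence for the new initial function is then $s(n+1)=\bigcup_j L^{\mathcal{A}}_{1j} f_j(n)\cup\bigcup_j L^{\mathcal{B}}_{1j} g_j(n)=f_1(n+1)\cup g_1(n+1)$, with $s(0)=c^{\mathcal{A}}_1\cup c^{\mathcal{B}}_1$, so that $s(n)=\mathcal{L}^{\mathcal{A}}_n\cup\mathcal{L}^{\mathcal{B}}_n$ for all $n$, whence $\bigcup_n s(n)=\mathcal{L}_{\mathcal{A}}\cup\mathcal{L}_{\mathcal{B}}$.

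For concatenation I keep $f_1$ as initial state, retain all transitions of both machines, and splice $\mathcal{B}$ onto the accepting states of $\mathcal{A}$: for every $f_i\in F_{\mathcal{A}}$ and every $j$, add a transition $f_i \stackrel{L^{\mathcal{B}}_{1j}}{\longrightarrow} g_j$, copying $g_1$'s outgoing transitions. The final set is $F_{\mathcal{B}}$, enlarged by $F_{\mathcal{A}}$ exactly when $g_1\in F_{\mathcal{B}}$ (that is, when $\varepsilon\in\mathcal{L}_{\mathcal{B}}$). Correctness amounts to checking that every accepting path factors, at the moment it first leaves the $f$-states for the $g$-states, as $uv$ with $u\in\mathcal{L}_{\mathcal{A}}$ and $v\in\mathcal{L}_{\mathcal{B}}$, and conversely that every such $uv$ is read by some path; the boundary cases $u=\varepsilon$ (which requires $f_1\in F_{\mathcal{A}}$, the very condition that places the copied transitions on $f_1$) and $v=\varepsilon$ (which requires $g_1\in F_{\mathcal{B}}$, the very condition under which $F_{\mathcal{A}}$ is kept final) are where the bookkeeping must be done carefully.

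For the Kleene star I again use a fresh initial state $s$, declared final, with copies $s \stackrel{L^{\mathcal{A}}_{1j}}{\longrightarrow} f_j$ of $f_1$'s outgoing transitions; I retain $\mathcal{A}$'s transitions and add ``restart'' transitions $f_i \stackrel{L^{\mathcal{A}}_{1j}}{\longrightarrow} f_j$ for every $f_i\in F_{\mathcal{A}}$ and every $j$, with final set $\{s\}\cup F_{\mathcal{A}}$. Declaring $s$ rather than $f_1$ final is the key point: $s$ has no incoming transitions, so it contributes only the empty word, whereas making the old $f_1$ final could wrongly accept any word whose path happens to cycle back to $f_1$. A nonempty word in $\mathcal{L}_{\mathcal{A}}^{*}$ is then read by splitting it into nonempty $\mathcal{A}$-blocks and using a restart transition at the start of each block after the first, while conversely any accepting path is segmented by its visits to $F_{\mathcal{A}}$ into $\mathcal{A}$-runs, so its label lies in $\mathcal{L}_{\mathcal{A}}^{*}$. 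I expect the main obstacle to be exactly this $\varepsilon$-free simulation: verifying that the copy-the-outgoing-transitions device faithfully reproduces $\varepsilon$-glueing while preserving the cross-section structure, and that the empty word is accepted in precisely the right cases, since a careless choice of final states (especially in the star) silently enlarges the language. In every construction the resulting machine has a single initial state, labels in $\mathcal{P}(\Sigma)$, and final weights in $\{\emptyset,\{\varepsilon\}\}$, so it is again a counting automaton over $\mathcal{P}(\Sigma)$, which completes the argument.
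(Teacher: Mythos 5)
Your constructions for union and for the Kleene star coincide with the paper's: a fresh initial state carrying copies of the old initial states' outgoing transitions, final exactly under the conditions you state, and (for the star) ``restart'' transitions copied onto every final state, with the fresh state rather than $f_1$ made final --- the same pitfall the paper's use of the extra state $h$ silently avoids. Your concatenation, however, is the mirror image of the paper's. The paper glues on the \emph{last} letter of the $\mathcal{A}$-word: for each state $f_i$ it adds a single transition $f_i \stackrel{L_i^{\mathcal{A}}}{\longrightarrow} g_1$ with $L_i^{\mathcal{A}}$ the union of the labels of transitions from $f_i$ into $F_{\mathcal{A}}$, and then needs two separate patches for the boundary cases: copying $g_1$'s outgoing transitions onto $f_1$ when $f_1 \in F_{\mathcal{A}}$ (to allow $u=\varepsilon$), and making $f_1$ final when $f_1 \in F_{\mathcal{A}}$ and $g_1 \in F_{\mathcal{B}}$ (to accept $\varepsilon$). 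You glue on the \emph{first} letter of the $\mathcal{B}$-word, copying $g_1$'s outgoing transitions onto every state of $F_{\mathcal{A}}$; this one rule subsumes the paper's $u=\varepsilon$ patch (since $f_1$, if final, receives the copies automatically), and your final-state rule ($F_{\mathcal{A}}$ stays final iff $g_1 \in F_{\mathcal{B}}$) handles both $v=\varepsilon$ and $\varepsilon\cdot\varepsilon$ at once, so your bookkeeping is more uniform. What the paper's version buys in exchange is sparseness --- only one new transition per state of $\mathcal{A}$, entering $\mathcal{B}$ at the single explicit port $g_1$ --- versus up to $m$ copied transitions per final state in yours. Both are correct $\varepsilon$-free simulations; your cross-section verification of the union via the recurrence $s(n)=f_1(n)\cup g_1(n)$ is in fact more explicit than the paper's ``by construction'' assertion.
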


\begin{proof}
Assume that $\mathcal{L}_{\mathcal{A}}$ and $\mathcal{L}_{\mathcal{B}}$ are the languages recognized by $\mathcal{A}$ and $\mathcal{B}$, respectively, where $\mathcal{A}$ and $\mathcal{B}$ are counting automata over $\mathcal{P}(\Sigma)$. We will show that we can construct counting automata over $\mathcal{P}(\Sigma)$ that recognize $\mathcal{L}_{\mathcal{A}} \cup \mathcal{L}_{\mathcal{B}}$, $\mathcal{L}_{\mathcal{A}} \cdot \mathcal{L}_{\mathcal{B}}$, and $\mathcal{L}_{\mathcal{A}}^*$.

Let $Q_{\mathcal{A}} = \{ f_1, f_2, \ldots, f_k \}$ be the set of states of $\mathcal{A}$. Assume that $I_{\mathcal{A}} = \{ f_1 \}$ and that $F_{\mathcal{A}}$ is the (nonempty) set of final states of $\mathcal{A}$. Then we know that $\mathcal{L}_{\mathcal{A}}$ is recognized by an automaton like the one in Figure \ref{fig:WeightedLanguages1.eps}, with weights $L_{ij}^{\mathcal{A}}$, for $1 \leq i,j \leq k$. Similarly, let $Q_{\mathcal{B}} = \{ g_1, g_2, \ldots, g_m \}$ be the set of states of $\mathcal{B}$. Suppose that $I_{\mathcal{B}} = \{ g_1 \}$ and that $F_{\mathcal{B}}$ is the (nonempty) set of final states of $\mathcal{B}$. Then $\mathcal{L}_{\mathcal{B}}$ is also recognized by an automaton like the one in Figure \ref{fig:WeightedLanguages1.eps}, but with weights $L_{ij}^{\mathcal{B}}$, for $1 \leq i,j \leq m$.

We first construct an automaton $\mathcal{C}$ recognizing $\mathcal{L}_{\mathcal{A}} \cup \mathcal{L}_{\mathcal{B}}$. Let $Q_{\mathcal{C}} = \{ h \} \cup Q_{\mathcal{A}} \cup Q_{\mathcal{B}}$ and $I_{\mathcal{C}} = \{ h \}$. We let $F_{\mathcal{C}} = \{h\} \cup F_{\mathcal{A}} \cup F_{\mathcal{B}}$ if $f_1 \in F_{\mathcal{A}}$ or $g_1 \in F_{\mathcal{B}}$. Otherwise, $F_{\mathcal{C}} = F_{\mathcal{A}} \cup F_{\mathcal{B}}$. Now we just need to specify the transitions. The new automaton $\mathcal{C}$ will contain all the transitions in $\mathcal{A}$ and in $\mathcal{B}$, plus some new ones. For every transition  $\displaystyle f_1 \stackrel{L_{1j}^{\mathcal{A}}}{\longrightarrow} f_j$, $1 \leq j \leq k$, we add a new transition $\displaystyle h \stackrel{L_{1j}^{\mathcal{A}}}{\longrightarrow} f_j$. Similarly, for every transition  $\displaystyle g_1 \stackrel{L_{1j}^{\mathcal{B}}}{\longrightarrow} g_j$, $1 \leq j \leq m$, we add a new transition $\displaystyle h \stackrel{L_{1j}^{\mathcal{B}}}{\longrightarrow} g_j$. Then $\mathcal{C}$ recognizes $\mathcal{L}_{\mathcal{A}} \cup \mathcal{L}_{\mathcal{B}}$. That is, $\mathcal{L}_{\mathcal{C}} = \mathcal{L}_{\mathcal{A}} \cup \mathcal{L}_{\mathcal{B}}$.

We now construct an automaton $\mathcal{C}$ that recognizes $\mathcal{L}_{\mathcal{A}} \cdot \mathcal{L}_{\mathcal{B}}$. Let $Q_{\mathcal{C}} = Q_{\mathcal{A}} \cup Q_{\mathcal{B}}$ and $I_{\mathcal{C}} = \{ f_1 \}$. We let $F_{\mathcal{C}} = \{ f_1 \} \cup F_{\mathcal{B}}$ if $f_1 \in F_{\mathcal{A}}$ and $g_1 \in F_{\mathcal{B}}$. Otherwise, $F_{\mathcal{C}} = F_{\mathcal{B}}$. As for the transitions in $\mathcal{C}$, we will include all the transitions in $\mathcal{A}$ and in $\mathcal{B}$, as well as some other ones. Assume that $F_{\mathcal{A}} = \{ f_{j_1}, f_{j_2}, \ldots, f_{j_l} \}$. Then, for every state $f_i$, $1 \leq i \leq k$, given the transitions $\displaystyle f_i \stackrel{L_{ij_1}^{\mathcal{A}}}{\longrightarrow} f_{j_1}, \ f_i \stackrel{L_{ij_2}^{\mathcal{A}}}{\longrightarrow} f_{j_2}, \ldots , \ f_i \stackrel{L_{ij_l}^{\mathcal{A}}}{\longrightarrow} f_{j_l}$, we add a new transition $\displaystyle f_i \stackrel{L_i^{\mathcal{A}}}{\longrightarrow} g_1$, where $\displaystyle L_i^{\mathcal{A}} = L_{ij_1}^{\mathcal{A}} \cup L_{ij_2}^{\mathcal{A}} \cup \ldots \cup L_{ij_l}^{\mathcal{A}}$. Finally, if $f_1 \in F_{\mathcal{A}}$, then for every transition $\displaystyle g_1 \stackrel{L_{1j}^{\mathcal{B}}}{\longrightarrow} g_j$, $1 \leq j \leq m$, we add a transition $\displaystyle f_1 \stackrel{L_{1j}^{\mathcal{B}}}{\longrightarrow} g_j$. By construction, we conclude that $\mathcal{L}_{\mathcal{C}} = \mathcal{L}_{\mathcal{A}} \cdot \mathcal{L}_{\mathcal{B}}$.

We conclude the proof by constructing an automaton $\mathcal{C}$ that recognizes $\mathcal{L}_{\mathcal{A}}^*$. Let $Q_{\mathcal{C}} = \{ h \} \cup Q_{\mathcal{A}}$, $I_{\mathcal{C}} = \{ h \}$, and $F_{\mathcal{C}} = \{ h \} \cup F_{\mathcal{A}}$. The automaton $\mathcal{C}$ will contain all the transitions in $\mathcal{A}$, plus some new ones. For each state $f_j$, $1 \leq j \leq k$, given the transition $\displaystyle f_1 \stackrel{L_{1j}^{\mathcal{A}}}{\longrightarrow} f_j$, we add a transition $\displaystyle h \stackrel{L_{1j}^{\mathcal{A}}}{\longrightarrow} f_j$, and for every $f_i \in F_{\mathcal{A}}$, we also add transitions $\displaystyle f_i \stackrel{L_{1j}^{\mathcal{A}}}{\longrightarrow} f_j$ (to the state $f_j$). Then $\mathcal{L}_{\mathcal{C}} = \mathcal{L}_{\mathcal{A}}^*$.
\end{proof}

By combining Theorems \ref{TheoremAutRec} and \ref{TheoremClosure}, we obtain the following.

\begin{corollary} \label{CorollaryClosure}
The languages recognized by systems of linear recurrence equations with coefficients in $\mathcal{P}(\Sigma)$ are closed under union, concatenation and the Kleene star.
\end{corollary}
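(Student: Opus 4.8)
The plan is to exploit the two-way translation between counting automata and systems of linear recurrence equations furnished by Theorem \ref{TheoremAutRec}, and then simply transport the automaton-level closure results of Theorem \ref{TheoremClosure} across this translation. Concretely, suppose $\mathcal{L}_1$ and $\mathcal{L}_2$ are two languages each recognized by a system of linear recurrence equations with coefficients in $\mathcal{P}(\Sigma)$. The first step is to invoke the converse direction of Theorem \ref{TheoremAutRec}: each such system is recognized by a counting automaton, so there exist counting automata $\mathcal{A}$ and $\mathcal{B}$ over $\mathcal{P}(\Sigma)$ with $\mathcal{L}_{\mathcal{A}} = \mathcal{L}_1$ and $\mathcal{L}_{\mathcal{B}} = \mathcal{L}_2$. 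Here I would use the fact, recorded earlier in this section, that the language recognized by a system of linear recurrence equations and the language recognized by the corresponding counting automaton coincide, so that passing to $\mathcal{A}$ and $\mathcal{B}$ does not alter the languages under consideration.

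With $\mathcal{A}$ and $\mathcal{B}$ in hand, the second step is to apply Theorem \ref{TheoremClosure}, which produces counting automata $\mathcal{C}_{\cup}$, $\mathcal{C}_{\cdot}$, and $\mathcal{C}_{*}$ over $\mathcal{P}(\Sigma)$ recognizing $\mathcal{L}_1 \cup \mathcal{L}_2$, $\mathcal{L}_1 \cdot \mathcal{L}_2$, and $\mathcal{L}_1^*$, respectively. The final step is to run the translation in the forward direction: by Theorem \ref{TheoremAutRec}, each of these automata generates a system of linear recurrence equations with coefficients in $\mathcal{P}(\Sigma)$ recognizing the same language. Chaining the three steps shows that each of the three languages is recognized by a system of linear recurrence equations with coefficients in $\mathcal{P}(\Sigma)$, which is precisely the assertion of the corollary.

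The one point that deserves care --- and the only place the argument could break --- is checking that the coefficients genuinely remain inside $\mathcal{P}(\Sigma)$ throughout the round trip, rather than escaping into the larger semiring $\mathcal{P}(\Sigma^*)$. The subtlety is that $\mathcal{P}(\Sigma)$ is closed under the semiring addition $\cup$ (a union of subsets of $\Sigma$ is again a subset of $\Sigma$) but \emph{not} under the semiring product $\cdot$ (concatenating two single letters yields a word of length $2$). I would therefore verify that every new transition weight created in the constructions of Theorem \ref{TheoremClosure} is built only by taking unions of existing weights $L_{ij}^{\mathcal{A}}, L_{ij}^{\mathcal{B}} \in \mathcal{P}(\Sigma)$ --- for instance the amalgamated weights $L_i^{\mathcal{A}} = L_{ij_1}^{\mathcal{A}} \cup \cdots \cup L_{ij_l}^{\mathcal{A}}$ appearing in the concatenation construction --- and never by concatenating them. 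Since these constructions introduce no products of coefficients, the resulting automata are again counting automata over $\mathcal{P}(\Sigma)$, and Theorem \ref{TheoremAutRec} applies to them verbatim. Once this bookkeeping is confirmed, the corollary follows immediately, being a formal consequence of Theorems \ref{TheoremAutRec} and \ref{TheoremClosure} with no further computation required.
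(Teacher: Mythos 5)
Your proof is correct and is exactly the paper's argument: the paper derives this corollary in one line by combining Theorems \ref{TheoremAutRec} and \ref{TheoremClosure}, which is precisely the round-trip translation you describe. Your extra check that the constructions in Theorem \ref{TheoremClosure} create new weights only via unions (never concatenations), so the coefficients stay in $\mathcal{P}(\Sigma)$, is a worthwhile detail the paper leaves implicit.
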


We are now ready to prove one of our main results.

\begin{theorem} \label{TheoremEquivalance}
A language recognized by a system of linear recurrence equations with coefficients in $\mathcal{P}(\Sigma)$ is regular. Conversely, every regular language is recognized by a system of linear recurrence equations with coefficients in $\mathcal{P}(\Sigma)$.
\end{theorem}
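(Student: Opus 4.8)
The plan is to treat the two directions separately, relying on the right-linear grammar $G_{\mathcal{A}}$ already constructed from a counting automaton and on the closure properties of Theorem \ref{TheoremClosure}. Throughout I would use Theorem \ref{TheoremAutRec} to pass freely between the equational picture and the automata-theoretic picture: a language recognized by the system \ref{eq:SysLanguages} is exactly the language $\mathcal{L}_{\mathcal{A}}$ of the corresponding counting automaton $\mathcal{A}$ over $\mathcal{P}(\Sigma)$, with a single initial state and final weights drawn from $\{\emptyset, \{\varepsilon\}\}$.

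For the forward direction, I would first invoke Theorem \ref{TheoremAutRec} to replace the given system of linear recurrence equations with the counting automaton $\mathcal{A}$ that recognizes it, so the recognized language is $\mathcal{L}_{\mathcal{A}}$. The key observation is that the grammar $G_{\mathcal{A}}$ built earlier in this section satisfies $\mathcal{L}_{\mathcal{A}} = L(G_{\mathcal{A}})$, and that every one of its productions has the form $A_i \rightarrow a A_j$ (a single terminal followed by a single nonterminal) or $A_i \rightarrow \varepsilon$. Hence $G_{\mathcal{A}}$ is a right-linear grammar, and right-linear grammars generate exactly the regular languages (this is the grammar side of Kleene's Theorem, recalled in the introduction); therefore $\mathcal{L}_{\mathcal{A}}$ is regular. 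The one point needing care is that the productions genuinely have this restricted shape, which holds precisely because the coefficients $L_{ij}$ lie in $\mathcal{P}(\Sigma)$ (sets of single letters) rather than in the larger semiring $\mathcal{P}(\Sigma^*)$, and because every final weight equals $\{\varepsilon\}$.

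For the converse I would argue by structural induction on a regular expression denoting the given regular language. Using Kleene's Theorem, every regular language over $\Sigma$ is obtained from the atoms $\emptyset$, $\{\varepsilon\}$, and $\{a\}$ (for $a \in \Sigma$) by union, concatenation, and the Kleene star. I would first exhibit explicit counting automata over $\mathcal{P}(\Sigma)$ recognizing each atom: a single non-final state with no transitions for $\emptyset$; a single state that is both initial and final for $\{\varepsilon\}$; and the two-state automaton $f_1 \stackrel{\{a\}}{\longrightarrow} f_2$ with $f_2$ final for $\{a\}$. A direct computation with the defining system \ref{eq:SysLanguages} confirms each case (for instance, in the last one $f_1(1) = \{a\} \cdot f_2(0) = \{a\}$ and $f_1(n) = \emptyset$ otherwise). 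The inductive step is then immediate from Theorem \ref{TheoremClosure}, since the class of languages recognized by counting automata over $\mathcal{P}(\Sigma)$ is closed under exactly the three regular operations. Thus every regular language is recognized by such an automaton, and so, by Theorem \ref{TheoremAutRec}, by a system of linear recurrence equations with coefficients in $\mathcal{P}(\Sigma)$ (this is where Corollary \ref{CorollaryClosure} could be cited directly in the equational language).

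I expect the main obstacle to be conceptual bookkeeping rather than a hard estimate: one must confirm that ``recognized by linear recurrence equations'' and ``recognized by a counting automaton'' coincide in this restricted setting with initial conditions in $\{\emptyset, \{\varepsilon\}\}$, so that Theorem \ref{TheoremAutRec} transports the problem cleanly between the two descriptions. Once that identification is secured, the forward direction collapses to the standard right-linear-grammar characterization of the regular languages, and the converse collapses to structural induction powered by Theorem \ref{TheoremClosure}; both halves are then routine.
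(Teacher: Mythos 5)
Your proposal is correct and follows essentially the same route as the paper: the forward direction uses the regular (right-linear) grammar $G_{\mathcal{A}}$ with $\mathcal{L}_{\mathcal{A}} = L(G_{\mathcal{A}})$, and the converse proceeds by structural induction, exhibiting base cases for $\emptyset$, $\{\varepsilon\}$, and $\{a_i\}$ and closing under the three regular operations via Theorem \ref{TheoremClosure} and Corollary \ref{CorollaryClosure}. The only cosmetic difference is that you present the base cases as two- or one-state automata and transport them through Theorem \ref{TheoremAutRec}, whereas the paper writes the corresponding recurrence systems directly.
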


\begin{proof} Suppose that $\mathcal{L}$ is a language recognized by a system of linear recurrence equations with coefficients in $\mathcal{P}(\Sigma)$. Then we know that there is a counting automaton $\mathcal{A}$ over $\mathcal{P}(\Sigma)$ such that $\mathcal{L} = \mathcal{L}_{\mathcal{A}}$. By definition, $\mathcal{L}_{\mathcal{A}}$ is generated by the grammar $G_{\mathcal{A}}$ provided before Theorem \ref{TheoremClosure}. Note that this grammar is regular. Therefore, $\mathcal{L}$ is a regular language.

Now we need to show that if a language is regular, then it is recognized by a system of linear recurrence equations with coefficients in $\mathcal{P}(\Sigma)$. Recall that the set of \textbf{regular languages} over an alphabet $\Sigma = \{ a_1, a_2, \ldots, a_m \}$ is defined by ({\it i}) $\emptyset, \{ \varepsilon \}, \{ a_1 \}, \{ a_2 \}, \ldots, \{ a_m \}$ are regular, and ({\it ii}) if $L_1, L_2$ are regular, then $L_1 \cup L_2, \ L_1 \cdot L_2, \text{ and } L_1^*$ are regular. It is not difficult to find systems of linear recurrence equations with coefficients in $\mathcal{P}(\Sigma)$ that recognize the languages in ({\it i}). Note that
\begin{equation} f_1(n+1) = \emptyset \cdot f_1(n), \quad f_1(0) = \emptyset \end{equation}
recognizes $\emptyset$,
\begin{equation} f_1(n+1) = \emptyset \cdot f_1(n), \quad f_1(0) = \{ \varepsilon \} \end{equation}
recognizes $\{ \varepsilon \}$, and
\begin{equation} \begin{tabular}{l r r l} \hspace{-0.35cm} $f_1(n+1) = \emptyset \cdot f_1(n)$ & \hspace{-0.35cm} $\cup$ \hspace{-0.35cm} & $\{ a_i \} \cdot f_2(n),$ & $f_1(0) = \emptyset$ \\ \hspace{-0.35cm} $f_2(n+1) = \emptyset \cdot f_1(n)$ & \hspace{-0.35cm} $\cup$ \hspace{-0.35cm} & $\emptyset \cdot f_2(n),$ & $f_2(0) = \{ \varepsilon \}$ \end{tabular} \end{equation}
recognizes $\{ a_i \}$, for each $a_i \in \Sigma$. Finally, suppose that $L_1$ and $L_2$ are two languages recognized by systems of linear recurrence equations. By Corollary \ref{CorollaryClosure}, there are systems of linear recurrence equations recognizing $L_1 \cup L_2, \ L_1 \cdot L_2, \text{ and } L_1^*$.
\end{proof}

Theorem \ref{TheoremEquivalance} shows that linear recurrence equations with coefficients in $\mathcal{P}(\Sigma)$ recognize, precisely, the regular languages. Hence, counting automata over $\mathcal{P}(\Sigma)$ recognize the regular languages as well. By Kleene's Theorem, regular languages are recognized by finite automata. Thus, it is natural to translate concepts from finite automata theory to counting automata over $\mathcal{P}(\Sigma)$. For example, we can define what it means for a counting automaton over $\mathcal{P}(\Sigma)$ to be deterministic.


\begin{definition}
Let $\mathcal{A}$ be a counting automaton over $\mathcal{P}(\Sigma)$ and let $Q_{\mathcal{A}} = \{f_1, f_2, \\ \ldots, f_k\}$ be its set of states. We say that $\mathcal{A}$ is \textbf{deterministic} if for every state $f_i$, the transition weight languages $L_{i1}, L_{i2}, \ldots, L_{ik}$ are pairwise disjoint.
\end{definition}

Note that this is equivalent to saying that given $f_i \in Q_{\mathcal{A}}$ and $a \in \Sigma$, $a$ belongs to at most one of the transition weight languages $L_{i1}, L_{i2}, \ldots, L_{ik}$. Hence, our definition coincides with the classical definition of a deterministic automaton (see \cite{Eil74}). Notice we have not discussed how to turn a (nondeterministic) counting automaton into a deterministic one. It should be clear, however, that the techniques to accomplish this from finite automata theory can be applied to counting automata over $\mathcal{P}(\Sigma)$.




\begin{example} \label{ExRecReg} Recurrence Equations and Regular Languages

Let $\mathcal{L} = (ab^*a)^*$ and notice that $\mathcal{L}$ is a regular language. Assume that the alphabet is $\Sigma = \{ a, b \}$. Then $\mathcal{L}$ is recognized by the automaton $\mathcal{A}$ below.
\begin{figure}[H]
\centering
\includegraphics[scale=1]{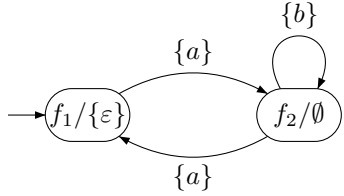}
\caption{Counting automaton $\mathcal{A}$ recognizing $(ab^*a)^*$}
\label{fig:labsars2.eps}
\end{figure}
\noindent Equivalently, $\mathcal{L} = (ab^*a)^*$ is recognized by the system below.
\begin{equation*} \label{eq:SysEx2}
\begin{tabular}{l r l}
$f_1(n+1) =$ & $\{ a \} \cdot f_2(n),$ & $f_1(0) = \{ \varepsilon \}$ \\
$f_2(n+1) =$ & \hspace{-0.45cm} $\{ a \} \cdot f_1(n) \cup \{ b \} \cdot f_2(n),$ & $f_2(0) = \emptyset$
\end{tabular}
\end{equation*}

\noindent We can write $\mathcal{L}$ as $\mathcal{L} = \displaystyle \bigcup_n \mathcal{L}_n$, where $\mathcal{L}_n = f_1(n)$ is the $n$th cross-section of $\mathcal{L}$. If we write the system above in matrix form, as $\begin{bmatrix} f_1(n+1) \\ f_2(n+1) \end{bmatrix} = \begin{bmatrix} \emptyset & \{a\} \\ \{a\} & \{b\} \end{bmatrix} \begin{bmatrix} f_1(n) \\ f_2(n) \end{bmatrix}$, then $\begin{bmatrix} f_1(n) \\ f_2(n) \end{bmatrix} = \begin{bmatrix} \emptyset & \{a\} \\ \{a\} & \{b\} \end{bmatrix}^n \begin{bmatrix} \{\varepsilon\} \\ \emptyset \end{bmatrix}$. Notice that $f_1(1) = \emptyset$, which agrees with the fact that the language $(ab^*a)^*$ has no words of length 1. Similarly, $f_1(4) = \{ aaaa, abba \}$, and note that these are precisely the words of length 4 in $(ab^*a)^*$.

\end{example}

\section{Path-Counting and Self-Counting Automata: Calculating the Density Function of a Regular Language}

In the previous section we saw how we can use weighted automata and linear recurrence equations to recognize a regular language. In particular, we saw how this type of language recognition induces a partition of the language into its cross-sections. And thus, for each $n$, we have a way of generating all the words of length $n$ in the language. In this section we will see that it is possible to output not only the words, but also the number of words, of length $n$ in the language. That is, we show how to calculate the density function of the language. Recall that a word of length $n$ is recognized by a successful path with the same length. With this connection in mind, we start by constructing an automaton that can count, for any given $n$, the number of successful paths of length $n$. In order to do this, we introduce the notion of a path-counting automaton.

\begin{definition}
Given a counting automaton $\mathcal{A}$, its \textbf{path-counting automaton} $\bar{\mathcal{A}}$ is a counting automaton over $\mathbb{N}$ that is able to count the number of successful paths of any given length in $\mathcal{A}$.
\end{definition}

We now show how to construct a path-counting automaton $\bar{\mathcal{A}}$. Assume that the set of states of $\mathcal{A}$ is $\{ f_1, f_2, \ldots, f_k \}$. Then we denote the set of states of $\bar{\mathcal{A}}$ by $\{ \bar{f}_1, \bar{f}_2, \ldots, \bar{f}_k \}$. We let $\bar{f}_i$ be an initial state in $\bar{\mathcal{A}}$ if $f_i$ is an initial state in $\mathcal{A}$. Suppose that $\bar{f}_i(0) = \bar{c}_i$. We let $\bar{c}_i = 1$ if $f_i$ is final. Otherwise, if $f_i$ is non-final, we let $\bar{c}_i = 0$. Finally, consider a transition $\bar{f}_i \stackrel{\bar{a}_{ij}}{\longrightarrow} \bar{f}_j$ in $\bar{\mathcal{A}}$, corresponding to a transition $f_i \stackrel{a_{ij}}{\longrightarrow} f_j$ in $\mathcal{A}$. We let $\bar{a}_{ij} = 1$ if $a_{ij} \neq 0$. Otherwise, if $a_{ij} = 0$, we let $\bar{a}_{ij} = 0$. By Theorem \ref{TheoremAutRec}, the automaton $\bar{\mathcal{A}}$ generates a system
\begin{equation} \label{eq:SysPC} \bar{f}_i(n+1) = \displaystyle \sum_{j=1}^k \bar{a}_{ij} \bar{f}_j(n), \quad \bar{f}_i(0) = \bar{c}_i, \quad 1 \leq i \leq k. \end{equation}
From the way we defined $\bar{c}_i$ and $\bar{a}_{ij}$, a simple proof by induction shows that, if $f_i$ is an initial state, $\bar{f}_i(n)$ equals the number of successful paths of length $n$ that start at $f_i$.

We now define the self-counting automaton.

\begin{definition}
Given a counting automaton $\mathcal{A}$ and its corresponding path-counting automaton $\bar{\mathcal{A}}$, the \textbf{self-counting automaton} $\left(\mathcal{A},\bar{\mathcal{A}}\right)$ is a counting automaton over $K \times \mathbb{N}$ capable of counting its own successful paths of any given length.
\end{definition}

Essentially, $\left(\mathcal{A},\bar{\mathcal{A}}\right)$ is an extension of the automaton $\mathcal{A}$. If the set of states of $\mathcal{A}$ is $\{ f_1, f_2, \ldots, f_k \}$, then the set of states of $\left(\mathcal{A},\bar{\mathcal{A}}\right)$ is $\{ \left(f_1,\bar{f}_1\right), \left(f_2,\bar{f}_2\right), \ldots, \left(f_k,\bar{f}_k\right) \}$. If $f_i$ is an initial state, we let $\left(f_i, \bar{f}_i\right)$ be an initial state, and for every final state $f_j$ of $\mathcal{A}$, we let $\left(f_j, \bar{f}_j\right)$ be a final state of $\left(\mathcal{A},\bar{\mathcal{A}}\right)$. Finally, notice that the weights are also ordered pairs. It is clear that if the weight of $f_i \longrightarrow f_j$ is $a_{ij}$, then the weight of $\left(f_i, \bar{f}_i\right) \longrightarrow \left(f_j, \bar{f}_j\right)$ is $(a_{ij},\bar{a}_{ij})$. We can think of $\left(\mathcal{A},\bar{\mathcal{A}}\right)$ as an extension of $\mathcal{A}$, where the first coordinate keeps track of the \textit{weights of the paths} traversed (thus mimicking $\mathcal{A}$), while the second coordinate keeps track of the \textit{number of paths} traversed.

\begin{remark}
It is not difficult to see that path-counting and self-counting automata can be used to count the number of successful paths of a weighted automaton over \textit{any} alphabet, not just a one-letter alphabet. Since a successful path does not depend on the alphabet used, simply identify all the letters in the alphabet, say $x_1, x_2, \ldots, x_m$, with a letter $x$, and use counting automata.
\end{remark}

We now return to our discussion of formal languages. Recall that a word $w$ of length $n$ is recognized by a counting automaton $\mathcal{A}$ if there a successful path of length $n$ in $\mathcal{A}$ with weight $\{ w \}$. Hence, given a counting automaton $\mathcal{A}$, we would expect the number of words of length $n$ in $\mathcal{L}_{\mathcal{A}}$ to be equal to the number of successful paths of length $n$ in $\mathcal{A}$. That is, we would expect the density function to be $\bar{f}_1(n)$. However, these two quantities could fail to be equal. Notice that (\textit{i}) a path could recognize more than one word, and (\textit{ii}) a word could be recognized by more than one path. The next theorem shows how to correctly define the function that counts the number of words of length $n$ (the density of the language) and the conditions needed on the automaton.

\begin{theorem} \label{TheoremCounting}
Let $\mathcal{A}$ be a deterministic counting automaton over $\mathcal{P}(\Sigma)$. Then the density function of $\mathcal{L}_{\mathcal{A}}$ (the language recognized by $\mathcal{A}$) can be defined via linear recurrence equations with coefficients in $\mathbb{N}$.
\end{theorem}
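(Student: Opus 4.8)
The plan is to define, for each state $f_i$, the integer sequence $d_i(n) = |f_i(n)|$, the number of words in the $n$th cross-section generated by $f_i$, and to show that these sequences satisfy a linear recurrence over $\mathbb{N}$ whose coefficients are the cardinalities $|L_{ij}|$. Since $d_1(n) = |\mathcal{L}_n|$ is exactly the number of words of length $n$ in $\mathcal{L}_{\mathcal{A}}$, this sequence is the density function, so the recurrence it satisfies proves the theorem. Organizationally this amounts to replacing each transition weight $L_{ij} \in \mathcal{P}(\Sigma)$ by the natural number $|L_{ij}|$ and each final weight ($\{\varepsilon\}$ or $\emptyset$) by its cardinality ($1$ or $0$), producing a counting automaton over $\mathbb{N}$ whose associated system, via Theorem \ref{TheoremAutRec}, is the sought recurrence. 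Note that this refines the path-counting automaton $\bar{\mathcal{A}}$, whose weights are $1$ per transition: to count words rather than paths we must weight each transition by $|L_{ij}|$.

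The derivation proceeds by applying the cardinality function to the defining system $f_i(n+1) = \bigcup_{j=1}^k L_{ij}\cdot f_j(n)$. First I would record, by a routine induction using $f_i(0) \in \{\emptyset, \{\varepsilon\}\}$, that every word in $f_j(n)$ has length exactly $n$; this uniform-length fact is what later licenses comparing first letters. The two substantive points are then (a) that the union over $j$ is disjoint, so cardinalities add, and (b) that each term satisfies $|L_{ij}\cdot f_j(n)| = |L_{ij}|\cdot |f_j(n)|$, so cardinalities multiply. Granting these, $d_i(n+1) = \sum_{j=1}^k |L_{ij}|\, d_j(n)$ with $d_i(0) = |c_i| \in \{0,1\}$, which is a linear recurrence with coefficients in $\mathbb{N}$, as required.

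The main obstacle --- and the place where the hypotheses do their work --- is establishing (a) and (b), which correspond precisely to the two failure modes flagged before the theorem. For (a), a word $w = a w'$ of length $n+1$ lies in $L_{ij}\cdot f_j(n)$ only if its first letter $a$ belongs to $L_{ij}$; determinism forces the $L_{ij}$ to be pairwise disjoint, so at most one index $j$ can contain $a$, whence $w$ lies in at most one summand. This is exactly what prevents a single word from being recognized along more than one path, so determinism is indispensable here. For (b), because $L_{ij} \subseteq \Sigma$ consists of single letters and the words of $f_j(n)$ share the common length $n$, the concatenation map $(a,w) \mapsto aw$ is injective, yielding the product formula; the factor $|L_{ij}|$ is what correctly accounts for a single path recognizing several words. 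I expect (a) to be the delicate step, since it is where determinism must be invoked, whereas (b) follows directly from working inside $\mathcal{P}(\Sigma)$ rather than $\mathcal{P}(\Sigma^*)$.
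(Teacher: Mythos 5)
Your proposal is correct, and it arrives at exactly the same recurrence system as the paper --- Eq.~\eqref{eq:SysLD}, with coefficients $|L_{ij}|$ and initial values $|c_i|$ --- but it verifies that this system computes the density by a genuinely different argument. The paper reasons about \emph{paths}: it observes that $\mathfrak{f}_1(n)$ refines the path-count $\bar{f}_1(n)$ of \eqref{eq:SysPC} so that each path is weighted by the number of words it recognizes, notes that this overcounts exactly when some word is recognized by two distinct successful paths, and then uses determinism in a proof by contradiction (two paths recognizing the same word must diverge at a state whose outgoing weight languages fail to be pairwise disjoint) to rule that out. You never mention paths at all: you show by induction that the cardinalities $d_i(n) = |f_i(n)|$ themselves satisfy the recurrence, using two concrete set identities --- the union $\bigcup_j L_{ij}\cdot f_j(n)$ is disjoint (here determinism enters, via the uniform-length fact and comparison of first letters), and $|L_{ij}\cdot f_j(n)| = |L_{ij}|\cdot|f_j(n)|$ (injectivity of concatenation, again from uniform length and $L_{ij}\subseteq\Sigma$). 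Your route buys rigor: every step is an explicit cardinality identity, and the slightly informal claims in the paper's argument (``it is clear that $\mathfrak{f}_1(n)$ is greater than or equal to the number of words,'' and the unformalized existence of a shared state with non-disjoint outgoing weights) are replaced by a clean induction on $n$. What the paper's route buys in exchange is the narrative connection to the path-counting and self-counting automata of Section 5, making explicit how the two failure modes (one path recognizing several words; one word recognized by several paths) are separately repaired by the cardinality coefficients and by determinism, respectively.
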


\begin{proof}
Recall that the function $\bar{f}_1(n)$ in Eq. \ref{eq:SysPC} counts the number of successful paths of length $n$ in $\mathcal{A}$. As we pointed out, this quantity need not be equal to the number of words of length $n$ in $\mathcal{L}_{\mathcal{A}}$. First, we need to account for case (\textit{i}) above, since a path could recognize more than one word. This is because a transition weight may contain more than one letter from $\Sigma$. The recurrence equations that define the density function will be just like the ones in Eq. \ref{eq:SysPC}, except that each coefficient needs to count the number of letters in the corresponding transition weight (instead of just being 0s or 1s). Given a regular language $\mathcal{L}$ recognized by a system
\begin{equation} \label{eq:SysLanguagesCOPY} f_i(n+1) = \displaystyle \bigcup_{j=1}^k L_{ij} \cdot f_j(n), \quad f_i(0) = c_i, \quad 1 \leq i \leq k, \end{equation}
we define the following system of linear recurrence equations
\begin{equation} \label{eq:SysLD} \mathfrak{f}_i(n+1) = \displaystyle \sum_{j=1}^k |L_{ij}| \ \mathfrak{f}_j(n), \quad \mathfrak{f}_i(0) = |c_i|, \quad 1 \leq i \leq k, \end{equation}
where $|S|$ denotes the cardinality of a set $S$.

It is clear that $\mathfrak{f}_1(n)$ is greater than or equal to the number of words of length $n$ in $\mathcal{L}_{\mathcal{A}}$. Note that $\mathfrak{f}_1(n)$ is strictly greater if there is a word recognized by more than one successful path. We claim that, since $\mathcal{A}$ is deterministic, no word of length $n$ is recognized by more than one successful path with the same length, and thus $\mathfrak{f}_1(n)$ gives precisely the number of words of length $n$ in $\mathcal{L}_{\mathcal{A}}$. (Hence, determinism will take care of case (\textit{ii}) above, where a word could be recognized by more than one path.)

Suppose, on the contrary, that there is a word $w$ recognized by more than one successful path. Since $\mathcal{A}$ has only one initial state, then there is at least one state $f_i$ that both paths share, with the property that the transition weights leaving $f_i$ are not pairwise disjoint. This contradicts the fact that $\mathcal{A}$ is deterministic. Thus, we conclude that the density function of $\mathcal{L}_{\mathcal{A}}$ is $\mathfrak{f}_1(n)$.
\end{proof}

\begin{remark}
The \textbf{density function} of a language $\mathcal{L}$ is usually denoted in the literature by $\rho_{\mathcal{L}}(n)$ (see \cite{Roz97}). Formal languages can be classified according to their density. For example, we say that a language has a \textit{constant}, \textit{polynomial}, or \textit{exponential} density if $\rho_{\mathcal{L}}(n) = \mathfrak{f}_1(n)$ has constant, polynomial, or exponential order, respectively.
\end{remark}

\begin{example} Path-Counting Automata, and a Language of Polynomial Density

Consider the regular language $\mathcal{L} = a^*ba^*ba^*$. It easy to see that $\mathcal{L}$ is recognized by the counting automaton $\mathcal{A}$ shown below. (Note that $\mathcal{A}$ is deterministic.)

\begin{figure}[H]
\centering
\includegraphics{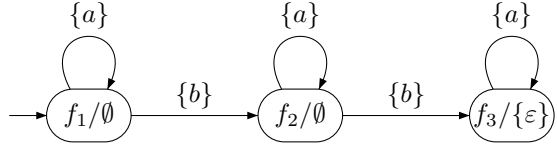}
\caption{Counting automaton $\mathcal{A}$ recognizing $a^*ba^*ba^*$}
\label{fig:asbasbas.eps}
\end{figure}

\noindent It is not difficult to see that the system that defines the density function $\mathfrak{f}_1(n)$ is the one below.

\begin{equation*} \label{eq:SysEx3}
\begin{tabular}{l l l l l}
$\mathfrak{f}_1(n+1) =$ & \hspace{-0.35cm} $\mathfrak{f}_1(n) \ +$ & \hspace{-0.4cm} $\mathfrak{f}_2(n),$ & & $\mathfrak{f}_1(0) = 0$ \\
$\mathfrak{f}_2(n+1) =$ & & \hspace{-0.4cm} $\mathfrak{f}_2(n) \ +$ & \hspace{-0.425cm} $\mathfrak{f}_3(n),$ & $\mathfrak{f}_2(0) = 0$ \\
$\mathfrak{f}_3(n+1) =$ & & & \hspace{-0.425cm} $\mathfrak{f}_3(n),$ & $\mathfrak{f}_3(0) = 1$
\end{tabular}
\end{equation*}

\noindent We could write this system in matrix form, as $\begin{bmatrix} \mathfrak{f}_1(n+1) \\ \mathfrak{f}_2(n+1) \\ \mathfrak{f}_3(n+1) \end{bmatrix} = \begin{bmatrix} 1 & 1 & 0 \\ 0 & 1 & 1 \\ 0 & 0 & 1 \end{bmatrix} \begin{bmatrix} \mathfrak{f}_1(n) \\ \mathfrak{f}_2(n) \\ \mathfrak{f}_3(n) \end{bmatrix}$. Then $\begin{bmatrix} \mathfrak{f}_1(n) \\ \mathfrak{f}_2(n) \\ \mathfrak{f}_3(n) \end{bmatrix} = \begin{bmatrix} 1 & 1 & 0 \\ 0 & 1 & 1 \\ 0 & 0 & 1 \end{bmatrix}^n \begin{bmatrix} 0 \\ 0 \\ 1 \end{bmatrix}$. Alternatively, we could obtain an explicit formula for $\mathfrak{f}_1(n)$ in the following way. Notice that $\mathfrak{f}_3(n) = 1$ for $n \geq 0$, and hence $\mathfrak{f}_2(n) = n$ for $n \geq 0$. Using this, we obtain that $\mathfrak{f}_1(0) = 0$ and, for $n \geq 1$, $\mathfrak{f}_1(n) = \mathfrak{f}_1(n-1) + (n-1)$. Thus, if $n \geq 1$, $\mathfrak{f}_1(n) = \displaystyle \frac{n(n-1)}{2}$. We conclude that the language $a^*ba^*ba^*$ contains exactly $\displaystyle \frac{n(n-1)}{2}$ words of length $n$, for $n \geq 1$.

\end{example}

\begin{example} Path-Counting Automata, and a Language of Exponential Density

Consider again the language $\mathcal{L} = (ab^*a)^*$ from Example \ref{ExRecReg}, recognized by the counting automaton $\mathcal{A}$ in Figure \ref{fig:labsars2.eps}. This automaton is, clearly, deterministic. It is not difficult to see that the density function is defined by the following system.

\begin{equation*} \label{eq:SysEx4}
\begin{tabular}{l r l}
$\mathfrak{f}_1(n+1) =$ & $\mathfrak{f}_2(n),$ & $\mathfrak{f}_1(0) = 1$ \\
$\mathfrak{f}_2(n+1) =$ & \hspace{-0.4cm} $\mathfrak{f}_1(n) + \mathfrak{f}_2(n),$ & $\mathfrak{f}_2(0) = 0$
\end{tabular}
\end{equation*}

\noindent Notice that $\mathfrak{f}_1(n+2) = \mathfrak{f}_2(n+1) = \mathfrak{f}_1(n) + \mathfrak{f}_2(n) = \mathfrak{f}_1(n) + \mathfrak{f}_1(n+1)$, with $\mathfrak{f}_1(0) = 1$ and $\mathfrak{f}_1(1) = 0$. Hence, if $F_n$ denotes the $n$th Fibonacci number, we have that $\mathfrak{f}_1(n) = F_{n-1}$, for $n \geq 1$. We conclude that if $n \geq 1$, the number of words of length $n$ in $(ab^*a)^*$ is $\displaystyle \frac{\varphi^{n-1} - (1-\varphi)^{n-1}}{\sqrt{5}}$, where $\varphi = \displaystyle \frac{1 + \sqrt{5}}{2}$ is the golden ratio.

\end{example}

\section*{Acknowledgements}
This work was made possible, in part, by the PSC-CUNY Grant 60116-39 40.

\bibliographystyle{plain}
\bibliography{TheBibliography}

\end{document}